\def\ps@headings{%

\def\@evenhead{\scriptsize\thepage \hfil \leftmark\mbox{}}%

\def\@oddfoot{}%

\def\@evenfoot{}}
\newtheorem{theorem}{Theorem}[section]
\begin{document}

\title{A Light-Weight Forwarding Plane \\for  Content-Centric Networks }

\author{J.J. Garcia-Luna-Aceves$^{1,2}$
and Maziar Mirzazad-Barijough$^2$ \\
$^1$Palo Alto Research Center, Palo Alto, CA 94304 \\
$^2$Department of Computer Engineering,
 University of California, Santa Cruz, CA 95064\\
 Email: jj@soe.ucsc.edu, maziar@soe.ucsc.edu }

\maketitle


\begin{abstract}

We present CCN-DART, a more efficient  forwarding approach for content-centric networking (CCN) than named data networking (NDN)  that substitutes Pending Interest Tables (PIT) with Data Answer Routing  Tables (DART) and uses a novel approach to eliminate forwarding loops. The forwarding state required  at each router using  CCN-DART consists of segments of the routes between consumers and content providers that traverse a content router, rather than the Interests that the router forwards towards content providers. Accordingly, the size of a DART is proportional to the number of routes used by Interests traversing a router, rather than the number of Interests traversing a router.  We show that CCN-DART avoids forwarding loops by comparing distances to name prefixes reported by neighbors, even when routing loops exist.  Results of simulation experiments comparing  CCN-DART with NDN  using the ndnSIM  simulation tool show that CCN-DART   incurs 10 to 20 times less  storage overhead.

\end{abstract}


\section{Introduction}

The leading  information-centric networking (ICN) approach can be characterized as {\em Interest-based} and consists of: populating forward information bases (FIB) maintained by routers with routes to name prefixes denoting content, sending content requests (called Interests) for specific content objects (CO) over paths implied by the FIBs, and delivering content along the reverse paths traversed by Interests. 

The original content-centric networking (CCN) proposal 
was the first example of an Interest-based ICN architecture  in which Interests need not be flooded and do not state the identity of the sender. Today, 
named data networking (NDN) \cite{ndn} and CCNx \cite{ccnx} are the prevalent Interest-based ICN approaches. 

Section~\ref{prelim} summarizes the operation of the NDN and CCN forwarding plane. 
Since the introduction of the original CCN proposal,
the research community  (e.g., \cite{ndn,  ndn-fw2, 
ndn-paper}) has assumed  that  per-Interest forwarding state maintained in
Pending Interest Tables (PIT), which is called a ``stateful forwarding plane" in NDN \cite{ndn-fw2},  is needed to allow  Interests and responses to such  Interests (NDOs or negative acknowledgments) to be forwarded without divulging the  sources of  Interests, and that  Interests from different consumers requesting the same content need to be aggregated to attain efficiency.

Section~\ref{sec-rationale} presents the design rationale for {\bf CCN-DART}, a simpler and far more efficient implementation of CCN based on replacing
per-Interest forwarding state with per-route forwarding state.  CCN-DART is an improvement  over \cite{ocean}.

Section~\ref{sec-design}  describes the operation of  CCN-DART. 
Instead of a PIT, a content router  maintains
a data answer routing  table (DART). DARTs allows routers to forward responses to Interests  toward the correct neighbors who requested them, without requiring Interests to state their origins. Routers with local content consumers or routers that support on-path caching maintain a requested content table (RCT) that lists the NDOs stored locally or requested by local consumers.
An Interest in CCN-DART states the name of the requested content and a
destination-and-return token ({\em dart}). The dart is used by forwarding routers to leave a trace of the path traversed by the Interest using local identifiers 
of the previous hop and   the current hop, so that a content object (CO) or a negative acknowledgment (NACK)  can be sent back to the content requestor, without the producer or caching site knowing the source of the Interest. 

To prevent forwarding loops, the FIBs of routers store the hop-count distances to name prefixes from each next hop of a name prefix. A router uses these FIB data to determine whether it can  establish a new DART entry.
Section~\ref{sec-loop} proves that CCN-DART eliminates forwarding loops, which we have shown can exist and remain undetected in NDN and CCNx \cite{ifip2015, ancs2015}. 

Section~\ref{sec-perf} compares the performance  of CCN-DART with NDN 
when
either on-path caching or edge caching is used. CCN-DART and NDN attain the same  end-to-end latencies in retrieving content, and they require similar numbers of forwarded Interests. However, NDN requires an order of magnitude more forwarding entries than CCN-DART. Our results show that maintaining PITs whose sizes grow with the number of Interests handled by  routers is not necessary
to attain correct and efficient forwarding of Interests in a content-centric network.


\section{Elements of NDN and CCNx Operation}
\label{prelim}

Routers in NDN and CCNx use Interests, Data packets, and NACKs (or more generally InterestReturn packets \cite{ccnx}) to exchange content.
An Interest is identified in NDN by the name of the CO requested and a nonce, and in CCNx by the name of the requested CO. A Data packet includes the CO name, a security payload, and the payload itself. A NACK carries the information needed to denote an Interest and a code stating the reason for the response. 

Three data structures are used to process Interests, Data packets, and NACKs:
A Content Store (CS), a FIB, and a PIT.
A CS is a cache for COs indexed by their names. With on-path caching, routers cache the content they receive in response to Interests they  forward.

A FIB is indexed by name prefixes and is populated using content-routing protocols or static routes. The FIB entry  for  a given name prefix  lists the interfaces that can be used to reach the prefix. In NDN, 
a  FIB entry also contains additional information \cite{ndn}.

PITs are used to determine the interfaces to which  Data packets or NACKs  should be sent back in response to Interests, allow Interests to not disclose their sources, and  enable Interest aggregation. 
A PIT entry in NDN   lists the name of a requested  CO,  one or multiple tuples stating a nonce received in an Interest for the CO and the incoming interface where it was received, and a list of the outgoing interfaces over which the Interest was forwarded. 
A PIT entry in CCNx is similar, but no nonces are used.

When a router receives an Interest, it checks whether there is a match in its CS for the CO requested in the Interest. The Interest matching mechanisms differ in NDN and CCNx, with the latter supporting exact Interest matching only.  If a match to the Interest is found, the router sends back a Data packet over the reverse path traversed by the Interest. If no match is found in the CS, the router determines whether the PIT stores an entry for the same content.   

In NDN, 
if the Interest states a nonce that differs from those stored in the  PIT entry for the requested content, then the router ``aggregates" the Interest by adding the incoming interface from which the Interest was received and the nonce to the PIT entry without forwarding the Interest.  If the same nonce in the  Interest is already listed in the PIT entry for the requested CO, the router sends a NACK over the reverse path traversed by the Interest.

Interest aggregation is done in CCNx if an Interest is received from an interface that is not listed in the PIT entry for the requested content.  A repeated Interest received from an interface already listed in the PIT is assumed to be a retransmission and may be forwarded. 
Interests traversing forwarding loops are eventually stopped by means of a hop-limit field included in the Interest and decremented at each hop.

If a router does not find a match in its CS and PIT, the router forwards the Interest along a route  listed  in its FIB for the best prefix match.    
In  NDN, a router can select an  interface  to forward an Interest if it is known that it can bring content  and its performance is ranked higher than other interfaces that can also bring content. The ranking of  interfaces  is done by a router independently of other routers based on information obtained through probing or the control plane \cite{ndn-fw2}.

\vspace{-0.05in}
\section{CCN-DART Rationale  }
\label{sec-rationale}

The design rationale for CCN-DART is based on a few observations reading NDN and the original CCN design. 
First,  the number of  content routers operating in a network is orders of magnitude smaller than the number of COs offered in the network. Hence, maintaining
forwarding state based on the  {\em routes} going through a router--each  used by many Interests--is by nature  orders of magnitude smaller than forwarding state based on the {\em Interests}  traversing a router. A stateful forwarding plane makes sense only if Interest aggregation can reduce the number of forwarded Interests by orders of magnitude, similar to the additional forwarding state incurred with PITs.

Second, we have shown \cite{ifip2015, ancs2015} that Interest loops may go undetected in NDN and CCN when Interest aggregation is supported, which means that Interests need not be answered, even when the requested content is available. Hence, the current NDN and CCN designs must be changed \cite{ancs2015, nof15}.

Third, given that edge caching has been shown to render similar performance results than on-path caching and optimal caching \cite{ali14, faya13},  it is highly likely that  content caching  at the edge makes the occurrence of  Interest aggregation extremely rare. In an Internet environment, the inter-arrival times of Interests for the same content and RTTs between consumers and producers and caches are such that content is available at caches by the time subsequent Interests requesting the same content arrive. 

Fourth, searching a CO in the CS followed by a search of a request for the CO in the PIT is redundant, in that the same CO  name is searched twice. A single table could be used  listing what content is stored  or requested locally.

\section{CCN-DART Operation }
\label{sec-design}

Interests are retransmitted only by the consumers that originated them, rather than routers that relay Interests.
Routers are assumed to know which interfaces are neighbor routers and which are local consumers, and forward  Interests on a best-effort basis. Given that no prior work shows that any Interest matching policy is better than simple exact matching of Interests, we assume that routers use exact Interest matching. 

\subsection{Information Exchanged and Stored}
\label{sec-info}

CCN-DART uses Interests, NACKs and Data packets to support content  exchange among routers.  Our description of these messages addresses only that   information needed to attain correct forwarding, which consists of the names of COs, the hop counts to prefixes,  and {\em destination-and-return tokens} ({\bf darts}). The terms neighbor and interface are used interchangeably.
The name of CO $j$ is denoted by $n(j)$, the name prefix that is the best match for $n(j)$ in a FIB is denoted by $n(j)^*$,  and $S^i_{n(j)^*}$ denotes the set of neighbors of router $i$ considered to be next hops to prefix $n(j)^*$. 
Darts are  local identifiers used to uniquely denote routes established between source and destination routers   over which Interests, Data packets, and NACKs are sent. Accordingly, darts can be very small   (e.g., 32 bits).

An Interest forwarded  by router $k$ requesting CO $j$ is denoted by  $I[n(j),  h^I(k), dart^I(k) ]$. It states the name  of the requested CO ($n(j)$), the hop count ($h^I(k)$) from  $k$ to prefix $n(j)^*$, and the dart  ($dart^I(k)$) used to establish an anonymous route back to the router that originates the Interest.

A Data packet sent in response to  an Interest is denoted by  
$DP[n(j), sp(j),$ $ dart^I(i)  ]$, and  states the name  of the CO requested in the Interest being answered ($n(j)$),  a  security payload ($sp(j)$) used optionally to validate the content object,  the dart ($dart^I(i)$) from the Interest being 
answered,  and the CO  itself. 
A NACK  to  an Interest is denoted by 
$NA[n(j), $ $ \mathsf{CODE}, $ $ dart^I(i)  ]$  and states 
the name  of the CO ($n(j)$),  a code ($\mathsf{CODE}$) indicating the reason why the NACK is sent,  and the dart ($ dart^I(i)$)  from the Interest being 
answered.  Reasons for sending a NACK include: an Interest loop is detected, no route is found towards requested content,  and no content is found.

Router $i$ maintains three tables:
a forwarding information base ($FIB^i$), a data-answer routing table ($DART^i$), and
a  requested-content table ($RCT^i$).
All routers must maintain FIBs and DARTs, and 
only those routers with local users and routers supporting content caching need to maintain an RCT.

A {\it predecessor} of router $i$ for Interests related to name prefix $n(j)^*$ 
is a router that forwards  Interest for COs with names that are best matched by  $n(j)^*$.  
Similarly, a   {\it successor} of router $i$ 
for Interests related to $n(j)^*$ is a router to whom router $i$ forwards Interest regarding COs with names  that are best matched by  name prefix $n(j)^*$.

$FIB^i$ is indexed using name prefixes.  
The entry for prefix $n(j)^*$ consists of a set of  tuples, one for each 
next hop  to prefix $n(j)^*$.
The tuples for prefix $n(j)^*$ are ranked based on their utility for forwarding.
As a minimum, the  tuple for next hop  $q \in S^i_{n(j)^*}$ 
states: 
{\small
\begin{enumerate}
\item
$h(i, n(j)^*, q)$: The distance to  $n(j)^*$ through $q$.
\item
$a(i, n(j)^*, q)$: The nearest anchor of  $n(j)^*$  through $q$.
\end{enumerate}
}

$DART^i$ stores the mappings of predecessors to successors along loop-free paths to name prefixes.
The entry created for Interests  received  from router $p$ (predecessor) and forwarded  to router $s$ (successor) towards a given anchor $a$ of a name prefix  is denoted by
$DART^i(a, p)$ and specifies:
 
{\small
\begin{enumerate}
\item
$a^i ( a, p ) $: The anchor $a$ for which the entry is set. 
\item
$p^i ( a, p ) $: The predecessor $p$ of the path to $a^i ( a, p ) $.
\item
$ pd^i(a, p )$: The {\em predecessor dart}, which equals the dart received in Interests from $p$ forwarded towards $a^i ( a, p )$.
\item
$s^i (a, p ) $: The successor $s$ selected by router  $i$ to forward Interests 
received from $p$ towards  $a^i ( a, p )$. 
\item
$sd^i (a, p) $: The {\em successor dart} included in Interests sent 
by router $i$ towards  $a^i ( a, p )$ through the successor.
 \item
 $h^i (a, p)$: The  hop-count distance to prefix $a$ through successor $s$ when $i$  establishes the DART entry.
\end{enumerate}
}

DART entries can be removed using a least-recently used policy or a maximum lifetime, for example. 
An entry in a DART can remain in storage for long periods of time in the absence of topology changes.  The removal of a DART entry 
simply causes a router to compute a new entry for Interests flowing towards an
anchor of prefixes. 

$RCT^i$ serves as an index of local content as well as local requests for remote content. It   is indexed by the CO names that have been requested by the router.  The entry for CO name $n(j)$  states the name of the CO ($n(j)$),  a pointer to  the local storage where the  CO ($p[n(j)]$) is stored, and a list of zero or more identifiers of local consumers ($lc[n(j)]$) that have requested the CO.  The RCT could be implemented as two separate indexes, one for local content and one for requests for remote content. 

If router $i$ is an anchor of name prefix $n(j)^*$ then it stores all the COs with names that match the name prefix. This is denoted by 
$n(j)^* \in RCT^i$.
If CO  $n(j)$ has been requested by one ore more local consumers and no copy of the CO is yet available, then  
$n(j) \in RCT^i$, $p[n(j)] = nil$,  and $lc[n(j)] \not= \phi$. On the other hand, if router $i$ caches CO  $n(j)$, then $n(j) \in RCT^i$,
$p[n(j)] \not= nil$, and $lc[n(j)] = \phi$.

We have shown \cite{ifip2015, ancs2015} that undetected Interest loops can occur in NDN and CCNx when Interests are aggregated while traversing routing loops resulting from inconsistencies in  FIB entries or inconsistent rankings of  routes at different routers.
CCN-DART uses the same approach we proposed for the 
elimination of undetected Interest loops in the context of  NDN and CCNx to prevent forwarding loops when DART entries are created.

 \vspace{0.05in}
\noindent
{\bf  Dart Entry Addition Rule (DEAR):}  \\
Router $i$  
accepts $I[n(j), h^I(k), dart^I(k)]$ from router $k$ and creates a DART entry for prefix $n(j)^*$ with $k$ as its predecessor  and a router $v \not= k$ as its successor if: 
\vspace{-0.05in}
\[ 
\exists~ v \in  S^i_{n(j)^*} (~ h^I( k)  > h(i, n(j)^*, v) ~)
\]

 \vspace{-0.25in}
\begin{figure}[h]
\begin{centering}
    \mbox{
    \subfigure{\scalebox{.19}{\includegraphics{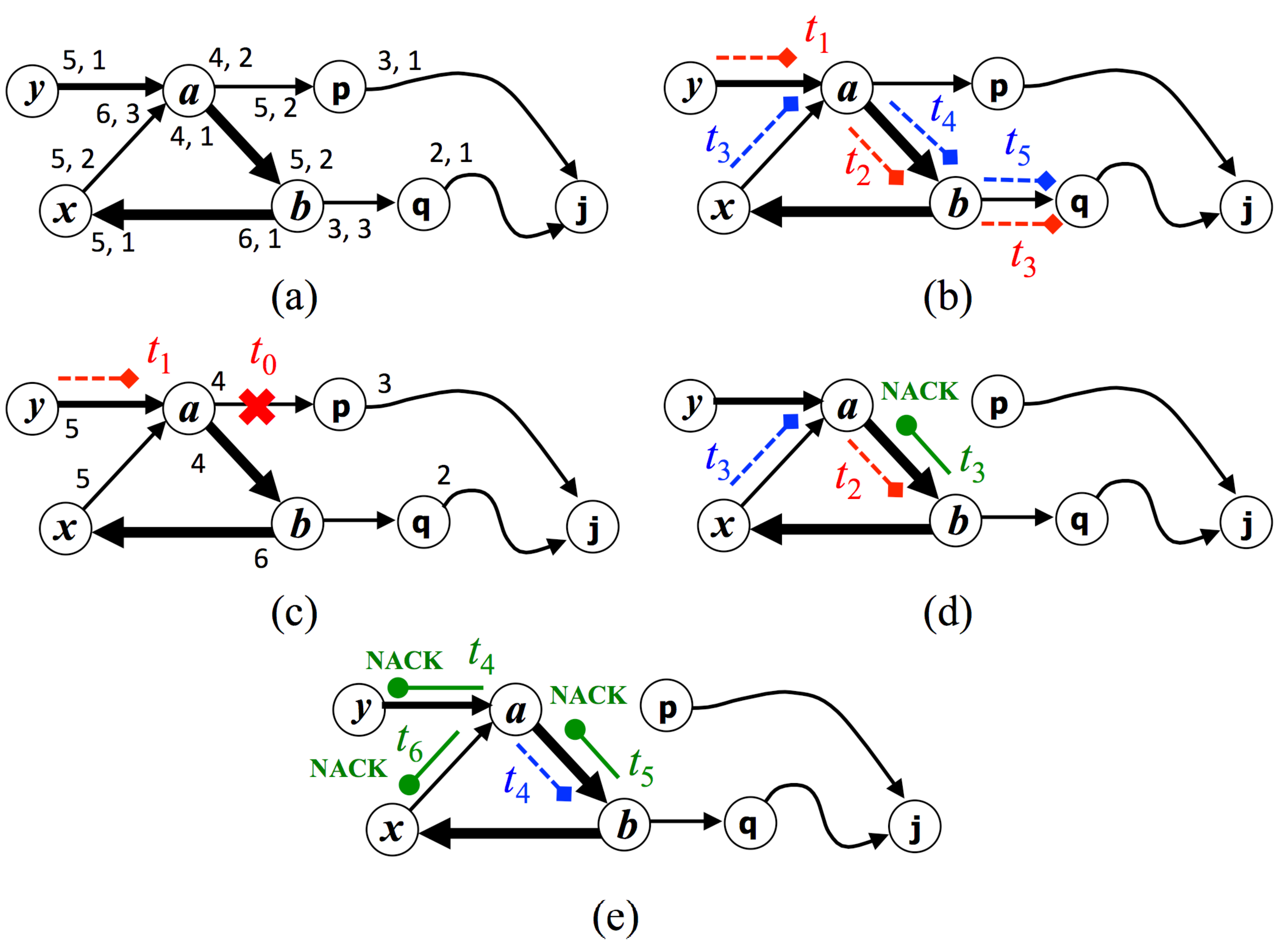}}}
      }
\vspace{-0.1in}
   \caption{CCN-DART avoids forwarding loops
   }
   \label{no-loop}
\end{centering} 
\end{figure}  

 \vspace{-0.1in}
The distance  information that must be stored in the FIBs to implement DEAR can be obtained easily from the control plane. Such content routing protocols as DCR \cite{dcr} and NLSR \cite{nlsr} are able to compute the required minimum-hop distances, which can then be  copied into the FIBs.

Figures~\ref{no-loop}(a) and (b) illustrate how using DEAR prevents 
Interests from traversing loops  when a multi-path routing protocol is used  and 
FIB entries are consistent but  local rankings of multiple routes available at each router  (e.g., NLSR \cite{nlsr}) cause  routing loops.
The pair of numbers next to a node in Figure~\ref{no-loop}(a) indicate the hop count from that router to $n(j)^*$ over an interface and the ranking of the interface according to the FIB of the router. 

Let the triplet 
$(v, h, r)$ denote an interface, its hop count and its ranking.
In Figure~\ref{no-loop}(a), $FIB^a$  states $(b, 4, 1)$, $(p, 4, 2)$,  
$(x, 6, 3)$, and $(y, 6, 4)$;  $FIB^b$ states $(x, 6, 1)$, $(a, 5, 2)$, and
$(q, 3, 3)$; and $FIB^x$ states $(a, 5, 2)$ and 
$(b, 5, 1)$. As Figure ~\ref{no-loop}(b) shows,  router $a$ receives 
$I[n(j), h^I(y) = 5,  dart^I(y)]$ from router $y$ at time $t_1$ 
and  forwards $I[n(j),$ $ h^I(a) = 4,$  $dart^I(a)]$ to $b$ because
$5= h^I(y) > h(a, n(j)^*, b) = 4$ and $b$ is ranked above $p$.  
Similarly, router $b$ receives the Interest at time $t_2$ and accepts it  because $4= h^I(a) > h(b, n(j)^*, q) = 3$.
Router $b$ uses  router $q$ as the next hop for the Interest, because 
$q$ is the highest ranked neighbor satisfying DEAR. 
This example illustrates that, independently of local rankings of multiple routes to prefixes, Interests 
traverse simple paths  by requiring each relaying router to satisfy DEAR.
 
Figures \ref{no-loop}(c) to (e) illustrate how DEAR operates when FIBs are inconsistent due to topology changes. 
Routers $a$ and $b$ update their FIBs at time times $t_0$ and $t_1$, respectively.
We assume that the routing updates have not been processed 
at routers $y$ and  $a$ when they forward Interests at time $t_1$ and $t_2$, respectively.
As shown in Figure \ref{no-loop}(d), router $b$ {\em must} send 
a NACK   to router $a$ 
because it does not have a neighbor with a shorter hop count  to prefix $n(j)^*$ than $h^I(a) = 4$. In turn, router
 $a$ forwards a NACK to router  $y$, and the Interest from $x$ also prompts a NACK from $b$ because DEAR is not satisfied.  Within a finite time after $t_1$, the FIBs of routers are  updated to show that  prefix $n(j)^*$ cannot be reached and Interests from local users for COs in that prefix cannot forwarded by routers $a$, $b$, $x$ and $y$. 
  
By contrast, assuming NDN or CCNx in the same example results in the Interests sent by $y$ and $x$ to be  forwarded along the forwarding loop involving $a$, $b$ and $x$. Router  $a$ aggregates the Interest from $x$, and router $x$ aggregates the Interest from $y$.  Those Interests  must then ``wait  to infinity"   in the PITs until their lifetimes expire or they are otherwise evicted from the PITs. Using nonces in Interests incurs considerable PIT storage overhead. However,  denoting Interests  using only CO names as in CCNx can result in even more Interest-looping problems.
Given the speed with which FIBs are updated to reflect correct distances computed in the control plane, false loop detection using DEAR should be  rare, and it is preferable  by far than storing PIT entries that  expire only  after many seconds without receiving responses. 

\subsection{Maintaining Forwarding State  }

Algorithms~\ref{algo-CCN-DART-create-Interest}  to  4 specify 
the steps  taken by  routers to process and forward Interests,  Data packets, and NACKs.   
The algorithms we present assume that the control plane updates $FIB^i$ to reflect any changes in hop counts  to name prefixes and anchors resulting from the loss of connectivity to one or more neighbors or link-cost changes. 
In addition,  a DART entry is silently deleted when connectivity with the successor or predecessor of the entry  is lost, or  it is not used for a prolonged period of time.

Algorithm \ref{algo-CCN-DART-create-Interest}  shows the steps taken by router $i$ to process Interests received from local consumers. For convenience, content requests from local consumers are assumed to be Interests stating  the name of a CO, together with an empty hop count to content and an empty dart. 

Router $i$ first looks up its RCT to determine if the requested CO is stored locally or a request for the CO  is pending.
If the CO is stored locally, a Data packet is sent back to the user requesting it. 
If a request for the same content is pending, the name of the user is added to the list of customers that have requested the CO. 
Router $i$ sends back a NACK if it is an anchor of name prefix $n(j)^*$ and the specific CO is not found locally, or the CO is remote and no FIB entry exists for a name prefix that can match $n(j)$.

If possible, router $i$ forwards the Interest through the highest ranked neighbor in its FIB for the name prefix matching  $n(j)$.   How such a ranking is done is left unspecified, and can be based on a distributed or local algorithm.

 If a DART entry exists for the selected successor that should receive  the Interest, the existing route is used; otherwise, a new DART entry is created before the Interest is sent. The successor dart assigned to the new DART entry is a locally unique identifier that must be different from all other successor darts being used by router $i$.

\vspace{-0.05in}
{\fontsize{6}{6}\selectfont
\begin{algorithm}[h]
\caption{Processing Interest  from user $c$ at router $i$}
\label{algo-CCN-DART-create-Interest}
{\fontsize{6}{6}\selectfont
\begin{algorithmic}
\STATE{{\bf function}  Interest\_Source}
\STATE {\textbf{INPUT:}  $RCT^i$, $FIB^i$, $DART^i$,  $I[n(j), nil, nil]$;}

\IF{$n(j) \in RCT^i$ }
	\IF{$p[n(j)] \not= nil$ }
		\STATE{
		retrieve CO $n(j)$; 
		send  $DP[n(j), sp(j), nil ]$ to $c$}
	\ELSE
		\STATE{$lc[n(j)]  = lc[n(j)]  \cup c$;  $p[n(j)] = nil$ ~(\% Interest is aggregated)}
	\ENDIF
\ELSE
	\IF{$n(j)^* \in RCT^i$}
		\STATE{send  $NA[n(j),  \mathsf{no~ content}, nil ]$ to $c$ ~(\% $n(j)$ does not exist) }
	\ELSE
		\IF{$n(j)^* \not\in FIB^i$ }
			\STATE{
			send $NA[n(j),  \mathsf{no~ route}, nil ]$ to $c$ ~(\% No route to $n(j)^*$ exists)} 
		\ELSE
			\STATE{create  entry for $n(j)$ in $RCT^i$: ~(\% Interest from $c$ is recorded) \\
		$lc[n(j)] =  lc[n(j)] \cup c$;  $p[n(j)] = nil$; }

			\FOR{{\bf each} $v \in S^i_{n(j)^*}$ {\bf by rank in} $FIB^i$} 
				\STATE{$a = a(i, n(j)^*, v)$; } 
				\IF{ $\exists   DART^i (a, i) ~(~ s^i(a, i) =   v ~)$}
					\STATE{ 
			 		$h^I(i) = h^i(a, i)$;  
					$dart^I(i) = sd^i( a, i)$; \\
					send $I[n(j), h^I(i), dart^I(i) ]$ to  $v$; {\bf return} 		
					}
				\ELSE
					\STATE{create entry $DART^i(a, i)$:  \\
				     compute $SD \not=  sd^i(p, q)  ~\forall ~DART^i( p, q)$; 
					\\
					$pd^i(a, i) =  SD$;  $sd^i(a, i) = SD$; \\
					$p^i(a, i) = i$; $s^i(a, i) = v$; $h^i(a, i) =  h(i, n(j)^*, v)$; \\
					} 
					\STATE{
					 $ h^I(i) = h^i(a, i) $;  
					 $dart^I(i) =   sd^i(a, i) $; \\
					send $I[n(j), h^I(i), dart^I(i) ]$ to  $v$}; 
					 {\bf return} 
				\ENDIF
			\ENDFOR	
		\ENDIF
	\ENDIF
\ENDIF
\end{algorithmic}
}
\end{algorithm}
}

\vspace{-0.05in}
Algorithm~\ref{algo-CCN-DART-Data} outlines the processing of Data packets.  
If the router has local consumers that requested the content, the Data packet is sent to those consumers based on the information stored in $RCT^i$. If the Data packet is received in response to an Interest that was forwarded from router $k$, router $i$ forwards the Data packet 
after swapping the successor dart received in the Data packet for the predecessor dart stored in $DART^i$. Router $i$ stores the data object if  caching is supported.

\vspace{-0.05in}
\begin{algorithm}[h]
\caption{Processing Data packet at router $i$}
\label{algo-CCN-DART-Data}
{\fontsize{6}{6}\selectfont
\begin{algorithmic}
\STATE{{\bf function} Data Packet\_Handling}
\STATE{\textbf{INPUT:} $DART^i$,  $RCT^i$,  
$DP[n(j), sp(j),  dart^I(q) ]$; }
\STATE{{\bf [o]} verify $ sp(j)$;}
\STATE{{\bf [o]} {\bf if} verification fails {\bf then} discard $DP[n(j), sp(j), dart^I(q) ]$}

\IF{$\exists DART^i(a, k) ~(~ dart^I(q)= sd^i(a, k)  \wedge p^i(a, k) = i ~)$ \\~~~(\% router $i$ was the origin of the Interest) }
	\FOR{{\bf each} $c \in lc[n(j)]$}
		\STATE{send $DP[n(j), sp(j), nil ]$ to  $c$; $lc[n(j)] = lc[n(j)] - \{ c \}$}
	\ENDFOR
\ENDIF
\IF{$\exists DART^i(a, k) ~(~ dart^I(q)= sd^i(a, k) \wedge p^i(a, k) = k \in N^ i  ~)$}
	\STATE{
	(\% Data packet can be forwarded to $k$:)\\
	$dart^I(i) = pd^i (a, k)$;  
	send $DP[n(j), sp(j),  dart^I(i) ]$ to  $k$}
\ENDIF

\STATE{
		{\bf [o]} {\bf if} no entry for  $n(j)$ exists in $RCT^i$ {\bf then} \\
		~~~~~~~~create  $RCT^i$ entry for $n(j)$: $lc[n(j)] = \emptyset$  \\~~~~~{\bf end if}\\
	{\bf [o]} store CO in local storage;   $p[n(j)] =$ address of CO in local storage
	} 

\end{algorithmic}}       
\end{algorithm}

\vspace{-0.05in}
Algorithm~\ref{algo-CCN-DART-nack} states the steps taken to handle NACKs, which are similar to the forwarding steps taken after receiving a Data packet. 
Router $i$ forwards the NACK to local consumers  if it was the origin of the Interest, or  to a neighbor router $k$  if it has a DART entry  with a successor dart  matching the dart stated in the NACK.

\vspace{-0.05in}
\begin{algorithm}[h]
\caption{Process NACK at router $i$}
\label{algo-CCN-DART-nack}
{\fontsize{6}{6}\selectfont
\begin{algorithmic}
\STATE{{\bf function} NACK\_Handling}
\STATE {\textbf{INPUT:} $DART^i$,  $RCT^i$,  
$NA[n(j), \mathsf{CODE},  dart^I(q)]$; 
}
\IF{$\exists DART^i(a, k) ~(~ dart^I(q)= sd^i(a, k)  \wedge p^i(a, k) = i ~)$ \\~~~(\% router $i$ was the origin of the Interest) }
	\FOR{{\bf each} $c \in lc[n(j)]$}
		\STATE{send $NA[n(j), \mathsf{CODE}, nil ]$ to  $c$; $lc[n(j)] = lc[n(j)] - \{ c \}$
		}
	\ENDFOR
	\STATE{
	delete entry for $n(j)$ in $RCT^i$
	}
\ENDIF

\IF{$\exists DART^i(a, k) ~(~ dart^I(q)= sd^i(a, k) \wedge p^i(a, k) = k \in N^ i ~)$}
	\STATE{
	(\% NACK can be forwarded to router $k$:)\\
		$dart^I(i) = pd^i (a, k)$; 
		send $NA[n(j), \mathsf{CODE}, dart^I(i) ]$ to  $k$}
\ENDIF
\end{algorithmic}}
\end{algorithm}

\vspace{-0.2in}
\begin{algorithm}[h]
\caption{Processing Interest  from router $k$ at router $i$}
\label{algo-CCN-DART-Interest}
 {\fontsize{6}{6}\selectfont
\begin{algorithmic}
\STATE{{\bf function} Dart\_Swapping}
\STATE {\textbf{INPUT:}  $RCT^i$, $FIB^i$, $DART^i$,  
$I[n(j), h^I(k),  dart^I(k)]$;}

\IF{$n(j) \in RCT^i \wedge p[n(j)] \not= nil$ }
		\STATE{
		retrieve CO $n(j)$; 
		send  $DP[n(j), sp(j), dart^I(k) ]$ to $k$}
\ELSE
	\STATE{(\% $n(j) \not\in RCT^i \vee p[n(j)] = nil$ )}
	\IF{$n(j)^* \in RCT^i$}
		\STATE{send  $NA[n(j),  \mathsf{no~ content}, dart^I(k)]$  to $k$ }
	\ELSE 
		\IF{$n(j)^* \not\in FIB^i$ }
			\STATE{
			send $NA[n(j),  \mathsf{no~ route}, dart^I(k)]$ to $k$ }
		\ELSE
			\IF{$\exists  DART^i (a, k)  ~(~ pd^i(a, k) =   dart^I(k) $ )}
				\STATE{ 
				$h^I(i) = h^i(a, k )$;  
				$dart^I(i) = sd^i( a, k)$;  \\
				send $I[n(j), h^I(i),  dart^I(i) ]$ to  $s^i(a, k ) $ 		
				}
			\ELSE
				\FOR{{\bf each} $v \in S^i_{n(j)^*}$ {\bf by rank in} $FIB^i$} 
					\IF {$ h^I(k) > h(i, n(j)^*, v) $   ~(\% DEAR is satisfied) }
					\STATE{$a = a(i, n(j)^*, v) $;}
					\STATE{create entry $DART^i( a, k)$: \\
					compute $SD \not=  sd^i(p, q)  ~\forall ~DART^i( p, q)$; \\
					$h^i(a, k) =  h(i, n(j)^*, v)$; \\
					$p^i(a, k) = k$; 
					$pd^i(a, k) =  dart^I(k)$; \\
					$s^i(a, k) = v$; 
					$sd^i(a, k) = SD$;} 
					\STATE{create Interest: \\
  					$ h^I(i) = h^i(a, k) $; 
					$dart^I(i) =   sd^i(a, k) $; \\
					send $I[n(j), h^I(i),   dart^I(i) ]$ to  $v$; \\
					{\bf return} }
					\ENDIF
				\ENDFOR ~(\% Interest may be traversing a loop)
				\STATE{send $NA[n(j),  \mathsf{loop},  dart^I(k) ]$ to $k$} 
			\ENDIF
		\ENDIF
	\ENDIF
\ENDIF

\end{algorithmic}
}
\end{algorithm}

\vspace{-0.05in}
Algorithm~\ref{algo-CCN-DART-Interest}  shows the steps needed to process an Interest received from a neighbor router $k$. 
If the requested content is cached locally, a Data packet  is sent back. 
As in Algorithm \ref{algo-CCN-DART-create-Interest}, router $i$ sends back a NACK if it is an anchor of $n(j)^*$ and the CO with name $n(j)$ is not found locally, or the CO is remote and no FIB entry exists for $n(j)^*$.
In contrast to Algorithm \ref{algo-CCN-DART-create-Interest},  Interests received from other routers are not aggregated.

If the Interest must be forwarded and an entry exists in $DART^i$  with router $k$ as the predecessor and $dart^I(k)$ as the predecessor dart, then DEAR has been satisfied by a previous Interest from $k$ over the existing path that $k$ is requesting to use.   Accordingly, router $i$ simply swaps $dart^I(k)$  for the successor dart stated in the entry in $DART^i$ 
and forwards the Interest. 
Alternatively, if  no DART entry exists with $k$ as a predecessor and $dart^I(k)$ as the predecessor dart, router $i$ tries to  find a neighbor that satisfies DEAR. 
The highest-ranked router $v$
satisfying DEAR is selected as the successor for the Interest and router $i$ creates a 
successor dart different from any other successor darts in $DART^i$, stores an entry with  $v$ and the new successor dart, and forwards the Interest to $v$.
If DEAR is not satisfied, then router $i$ sends a NACK back to router $k$.

\subsection{CCN-DART Forwarding Example }
\label{sec-example}

Figure \ref{sdarts} illustrates how darts are used to label Interests and associate Interests with Data packets and NACKs.
In the example, routers $a$ , $b$, and $x$ have local consumers originating  Interests, and those Interests are  assumed to request
COs with names that are best matched with name prefixes  for which   router $d$ is an anchor.

The arrowheads in the links of the figure denote the next hops stored in the FIBs of routers, and  $y(i)$ denotes the $i$th dart in $DART^y$. 
The figure shows the DART entries maintained  at all  routers for 
name prefixes for which router $d$ is an anchor, and the   RCT entries stored at routers $a$, $b$, and $x$. 
Consumers {\em A, C, N,} and {\em P} request the same CO with name $n(j)$, and router $a$ aggregates their requests and needs to send only one Interest for $n(j)$ towards $d$. Similarly, it aggregates the Interests from consumers {\em A, C,} and {\em Q}. Similar Interest aggregation of local requests occur at routers $b$ and $x$. 

\vspace{-0.12in}
\begin{figure}[h]
\begin{centering}
    \mbox{
    \subfigure{\scalebox{.23}{\includegraphics{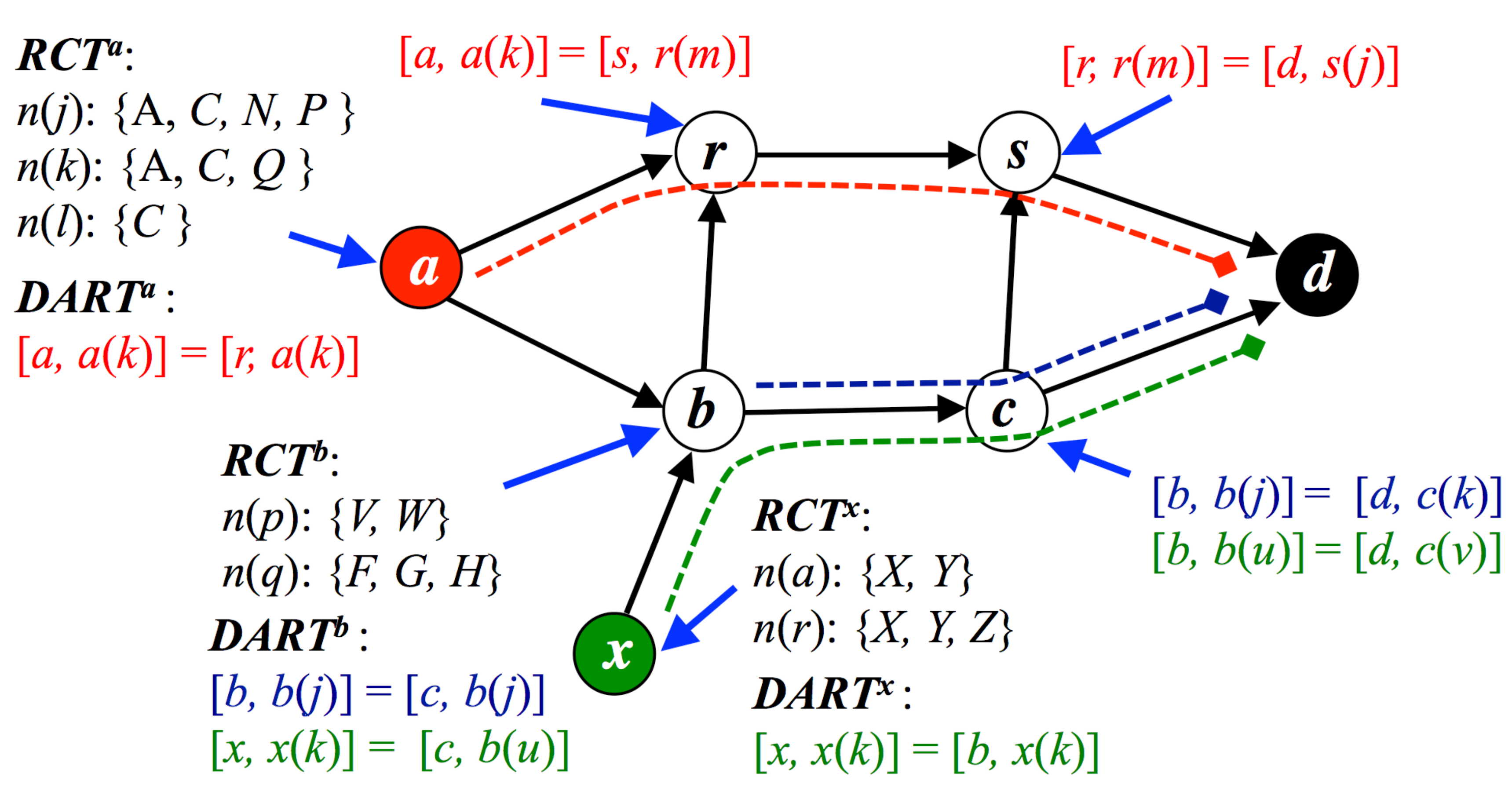}}}
      }
\vspace{-0.15in}
   \caption{Interest forwarding in CCN-DART.
   }
   \label{sdarts}
\end{centering} 
\end{figure}  

\vspace{-0.1in}
Router $a$ 
forwards  Interests intended for anchor $d$ to neighbor $r$, and routers $b$ and $x$ forwards their Interests to neighbors $s$ and $c$, respectively.  
Routers $a$, $r$, and $s$  establish the following mappings in their DARTs:
$ [ a; a(k) ] \leftrightarrow [r;  a(k)]$ at $a$,
$ [ a; a(k) ] \leftrightarrow [s;  r(m)]$ at $r$, and
$ [ r; r(m) ] \leftrightarrow [d;  s(j)]$ at $s$.
These mappings  denote the route ($a$, $r$, $s$, $d$)
uniquely. Similarly,  
routers establish the DART mappings shown in the figure that denote the routes ($x$, $b$, $c$, $d$) and  ($b$, $c$, $d$). 

All the Interests from consumers local to routers $a$, $b$, and $x$ regarding COs with names  in prefixes for which $d$ is an anchor can be routed towards $d$ using the same few darts shown. Given that 
a Data packet  or NACK specifies
the successor dart  stated the Interest it answers, 
Data packets and NACKs can be forwarded correctly from $d$ (or a router  caching the requested CO along the way to $d$) to routers $a$, $b$, or  $x$ unambiguously. 
In turn, routers $a$, $b$, and $x$ can determine how to deliver the responses to local consumers based on the the RCT entries mapping each CO requested with the names of the customers that requested them.

\section{Correctness of  CCN-DART }
\label{sec-loop} 

The following  theorem shows that CCN-DART prevents  Interests from being propagated along loops,  independently of whether the 
topology is static or dynamic or the FIBs are consistent or not.   
To discuss the correctness of Interest forwarding in CCN-DART, we say that
a forwarding  loop of $h$ hops for a CO with name $n(j)$ occurs  when  Interests requesting the CO are forwarded  by routers along a cycle $L = $   $\{ v_1 , v_2 , ..., v_h , v_1 \}$,
such that router $v_k$ receives an Interest for CO $n(j)$ from $v_{k-1}$ 
and forwards the Interest  to $v_{k+1}$, with $1 \leq k \leq h$, $v_{h+1} = v_1$, and $v_{0} = v_h$. 

\begin{theorem}
\label{theo3}
Interests  cannot traverse forwarding loops in a network in which  CCN-DART is used.
\end{theorem}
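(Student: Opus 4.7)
The plan is to assume for contradiction that a forwarding loop $L = \{v_1, v_2, \ldots, v_h, v_1\}$ exists (with the convention $v_0 = v_h$) and to derive a contradiction by tracking the hop counts carried by the looping Interest. The key observation is that DEAR forces a strict decrease in the stored distance at the moment any DART entry is created, and that once created the stored hop count of a DART entry never changes. Hence the hop count $v_k$ stamps into the Interest it forwards to $v_{k+1}$ is a well-defined fixed value $H_k$, and I only need to establish $H_{k-1} > H_k$ for every $k$ to close the cycle.

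First I would identify the DART entry each $v_k$ uses for the looping Interest. Because $v_k$ receives the Interest from $v_{k-1}$ carrying some dart, Algorithm 4 forces $v_k$ to forward through the unique entry $DART^{v_k}(a_k, v_{k-1})$ whose stored $pd^{v_k}(a_k, v_{k-1})$ matches that dart, and the hop count in the outgoing Interest is exactly $H_k := h^{v_k}(a_k, v_{k-1})$. Symmetrically, $v_{k-1}$ is forwarding the Interest using its own loop entry $DART^{v_{k-1}}(a_{k-1}, v_{k-2})$, whose successor dart $sd^{v_{k-1}}(a_{k-1}, v_{k-2})$ is placed into every Interest it forwards to $v_k$. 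Since successor darts are chosen to be locally unique at each router (the freshness rule in Algorithm 4), the equality $pd^{v_k}(a_k, v_{k-1}) = sd^{v_{k-1}}(a_{k-1}, v_{k-2})$ pins down $v_{k-1}$'s loop entry uniquely, and forces that entry to have existed at the moment $DART^{v_k}(a_k, v_{k-1})$ was created.

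Next I would invoke DEAR at that creation moment: it requires that the received hop count strictly exceed the distance newly stored at $v_k$. Since $v_{k-1}$'s loop entry already existed and was in use at that instant, the arriving Interest carried the fixed value $H_{k-1}$, so DEAR yields $H_{k-1} > H_k$. Applying this inequality around the loop gives $H_1 > H_2 > \cdots > H_h > H_1$, a contradiction, ruling out the existence of $L$.

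The main obstacle I anticipate is precisely the bridge between a \emph{past} DEAR check performed at entry creation and the \emph{present} behavior of the loop: one must argue that the hop count observed at $v_k$'s creation equals the current $H_{k-1}$ rather than some stale or coincidental value, and this rests on the immutability of stored DART fields combined with the local uniqueness of successor darts. A smaller secondary case to address is when some $v_k$ in $L$ happens to also have a local consumer requesting the same content; Algorithm 1 installs such entries with $p^i(a, i) = i$, so they cannot serve as loop links, and the argument remains confined to router-to-router entries, matching the definition of a forwarding loop given just before the theorem.
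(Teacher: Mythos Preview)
Your proposal is correct and follows essentially the same argument as the paper: assume a forwarding loop $L=\{v_1,\ldots,v_h,v_1\}$, use DEAR at each hop to obtain the cycle of strict inequalities $H_1 > H_2 > \cdots > H_h > H_1$, and reach a contradiction. Your treatment is in fact more careful than the paper's own proof, which simply asserts $h^I(v_{k-1}) > h^I(v_k)$ at each hop without distinguishing the DEAR check performed at entry-creation time from the current traversal; your discussion of DART-entry immutability, successor-dart uniqueness, and the Algorithm~1 edge case fills in subtleties the paper elides, but the overall strategy is identical.
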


\begin{proof}
Consider a network in which CCN-DART is used.  Assume for the sake of contradiction that  routers in a forwarding  loop  $L$ of $h$ hops  $\{ v_1 , $ $v_2 , ..., $ $v_h , v_1 \}$  send 
Interests for $n(j)$ along $L$, with no router in $L$  detecting the incorrect forwarding of any of the Interests sent over the loop.

Given that  $L$ exists by assumption, $v_k \in L$ must send 
$I[n(j), h^I(v_{k}),  dart^I(v_k)]$ to router $v_{k+1} \in L$ for $1 \leq k \leq h - 1$, and $v_h \in L$ must send $I[n(j),$ $ h^I(v_{h}),$  $ dart^I(v_{h}) ]$ to router $v_{1} \in L$. 

For $1 \leq k \leq h - 1$, let $h(v_k, n(j)^*)^L$ denote the value of $h^I (v_k)$ when router  $v_k$   sends $I[n(j),$ $ h^I(v_{k}),$  $ dart^I(v_k) ]$  to router $v_{k+1}$,  
with $h(v_k, $ $n(j)^*)^L $ $=$  
$h(v_k, n(j)^*, v_{k+1})$. 
Let $h(v_h, n(j)^*)^L$ denote 
the value of $h^I (v_h)$ when router $v_h$ sends
$I[n(j),$ $ h^I(v_{h}),$ $dart^I(v_h) ]$ to router $v_1 \in L$, with $h(v_h, $ $n(j)^*)^L$ $ =$  $ h(v_h, n(j)^*, $ $v_1)$. 

Because no router in $L$ detects the incorrect forwarding  of an Interest 
and forwarded Interests are not  aggregated in CCN-DART, each router in $L$ must send its own Interest as a result of the Interest it receives from the previous hop in $L$. This implies that router
$v_k \in L$ must accept $I[n(j), h^I(v_{k-1}),$ $dart^I(v_{k-1}) ]$ 
for $1 \leq k < h $, and  router $v_1 \in L$ must accept $I[n(j), h^I(v_{h}),$  $dart^I(v_{h}) ]$.

If router $v_k$ sends Interest $I[n(j), $ $h^I(v_{k}),$  $dart^I(v_{k}) ]$ to router $v_{k+1}$ as a result of receiving 
$I[n(j), h^I(v_{k-1}),$  $dart^I(v_{k - 1}) ]$ from router $v_{k-1}$, then 
it must be true that  $h^I(v_{k - 1}) $ $ > $ $h(v_{k}, n(j)^*)^L $ $=$ $ h^I(v_{k}) $ for $1 < k \leq h$.
Similarly, if router
$v_1$ sends $I[n(j), h^I(v_{1}),$  $ dart^I(v_{1}) ]$ to router $v_{2}$ as a result of receiving 
$I[n(j),$  $h^I(v_{h}), $ $dart^I(v_{h}) ]$ from  router $v_{h}$, then  $h^I(v_{h}) > $ $h(v_{1},$ $ n(j)^*)^L = $ $h^I(v_{1}) $.

It follows from the above argument that, for $L$ to exist and be undetected when each router in the loop uses DEAR to create  DART entries,  
it must be true that $h^I(v_{h}) > h^I(v_{1}) $
and $h^I(v_{k - 1}) >  h^I(v_{k}) $ for $1 < k \leq h$.
However, this is a contradiction, because it implies that $h^I(v_{k}) >  h^I(v_{k})$ for $1 \leq k \leq h $. Therefore, the theorem is true. \end{proof}

\section{Performance Comparison}
\label{sec-perf}

We compare the forwarding entries needed to forward Interests and responses in NDN and CCN-DART  using simulation experiments  based on implementations of NDN and CCN-DART in the ndnSIM simulation tool \cite{ndnsim}.  The NDN implementation was used without modifications, and CCN-DART was implemented in the ndnSIM tool following Algorithms 1 to 4. 
The network topology consists of 200 nodes distributed uniformly in a 100m $\times$ 100m area and nodes with distance of 12m or less are connected with point-to-point links of delay 15ms. The data rates of the links are set to 1Gbps to eliminate the effects that a sub-optimal implementation of  CCN-DART or NDN may have on the results. All nodes are producers and consumers at the same time, which is the worst-case scenario for CCN-DART; and consumers generate object requests with a Zipf distribution with  parameter $\alpha = 0.7$.
Producers are assumed to publish 1,000,000 different COs. Simulation runs consist of request rates of 50 to 2000 interests per router second, and represents the aggregate rate of Interests from  local consumers at each router.

The assumption that each 
router is locally attached to a content producer and   local users requesting content constitutes the worst-case scenario for CCN-DART compared to NDN, because it results in many more DART entries. In a realistic deployment, only a small subset of the total number of routers in the network are attached to local producers of content.

We considered  on-path caching and edge caching. For the case of on-path caching, every router on the path traversed by a Data packet from the producer to the consumer caches the CO in its  content store. On the other hand, with edge caching, only the router directly connected to the requesting  consumer  caches the resulting CO.

\subsection{Size of PITs and DARTs}

Figure \ref{sizePitVSDart} shows the average size and standard deviation of the number of entries in PITs used in NDN and  the number of entries in DARTs and RCTs used in CCN-DART as a function of the total content-request rates per router. The vales shown for RCTs represent only the number of local pending Interests; the number of COs cached locally  is not shown, given that the number of such entries would be very large and would be the same for NDN and CCN-DART.

\vspace{-0.11in}
\begin{figure}[h]
\begin{centering}
    \mbox{
    \subfigure{\scalebox{.36}{\includegraphics{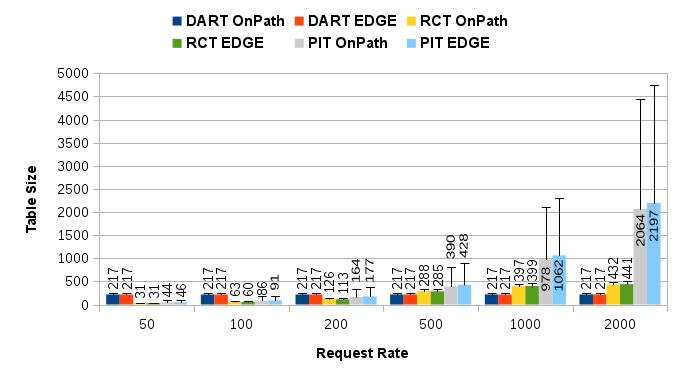}}}
    }
\vspace{-0.3in}
   \caption{Size of PITs, DARTs and RCTs  }
   \label{sizePitVSDart}
\end{centering} 
\end{figure}  

\vspace{-0.05in}
As the figure shows, the size of PITs grows dramatically as the rate of content requests increases, which is expected given that PITs maintain per-Interest forwarding state. By contrast, the size of DARTs remains  constant with respect to the content-request rates. 
The small average size of RCTs compared to the average size of PITs  indicates that  
the average size of a PIT is dominated by the number of Interests a router forwards from other routers. 

For low request rates, the average number of entries in a DART is actually larger than in a PIT. This is a direct consequence of the fact that a PIT entry is deleted immediately after an Interest is satisfied, while a
DART entry is kept  for long periods of time (seconds) in our implementation,  independently of whether or not it is  used to forward Interests.  Given  the small sizes of  DARTs, the cost of maintaining DART entries that may not be used at light loads is more than compensated by the significant reduction in forwarding state derived from many Interests being forwarded using existing DART entries  at higher request rates. This should be the case in real deployments, where the  number of routers that are origins of routes to prefixes is  much smaller than the total number of routers. However, optimizing the length of time that a DART entry lasts as a function of its perceived utility for content forwarding is an area that deserves further study. 

As the total content-request rate per router increases, the size of a PIT can be  more than 10 to 20 times the size of a DART, because a given DART entry is used for many Interests in CCN-DART, while NDN requires a different  PIT entry for each Interest. 
It is also interesting to see the effect of on-path caching compared to edge-caching. 
The average size of DARTs is independent of  where content is being cached, and  on-path caching in NDN does not make a significant difference in the size of a PIT compared to edge-caching. 

\vspace{-0.05in}
\begin{figure}[h]
\begin{centering}
    \mbox{
    \subfigure{\scalebox{.35}{\includegraphics{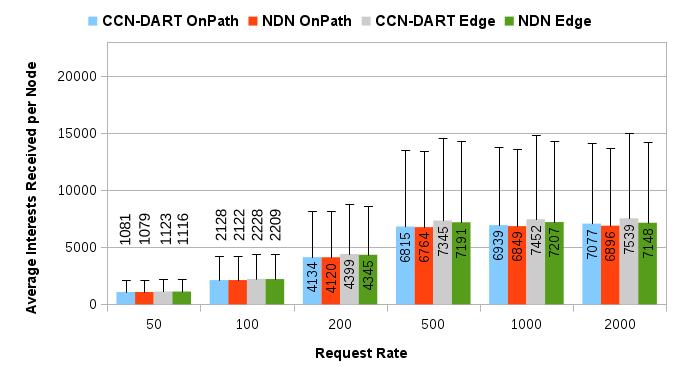}}}
    }
\vspace{-0.23in}
   \caption{Number of Interests received by routers}
   \label{Intcnt}
\end{centering} 
\end{figure}  

\vspace{-0.25in}
\begin{figure}[h]
\begin{centering}
    \mbox{
    \subfigure{\scalebox{.34}{\includegraphics{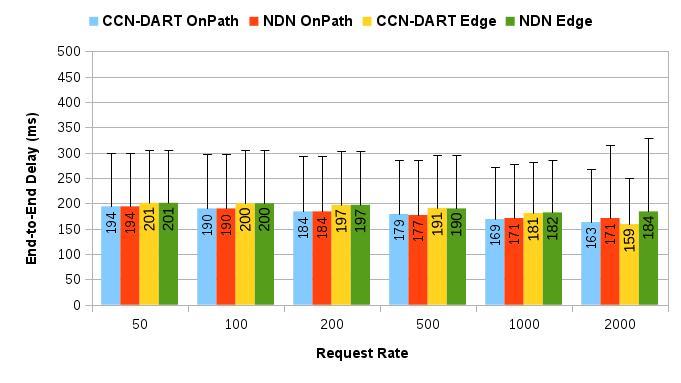}}}
    }
\vspace{-0.23in}
   \caption{Average end-to-end delays}
   \label{delayCCN-DARTvsNDN}
\end{centering} 
\end{figure}

\subsection{Interest Traffic and   End-to-End Delays}

Figure \ref{Intcnt} shows the average number of  interests received by each router in NDN and CCN-DART as a function of the content request rates for on-path caching and edge caching.  The number of Interests received in CCN-DART is larger than the corresponding number for NDN.
However, it is clear from the figure that the average numbers of Interests received by each router in NDN and CCN-DART are almost the same for all request rates. 

The benefit of on-path caching is apparent for both NDN and CCN-DART, but appears 
slightly more pronounced for the case of CCN-DART. This should be expected, because CCN-DART does not aggregate Interests and on-path caching results in fewer Interests being forwarded.

Figure \ref{delayCCN-DARTvsNDN} shows the average end-to-end delay for NDN and CCN-DART as a function of   content-request rates for on-path caching and edge caching.  As the figure shows, the average delays for NDN and CCN-DART are  essentially  the same for all content-request rates. This should be expected, given that in the experiments the routes in the FIBs are static and loop-free,  and the number of Interests processed by routers is similar.

\section{Conclusions}

We  introduced CCN-DART, 
an efficient approach to  content-centric networking that supports Interest forwarding without revealing the sources of Interest and with no need to maintain forwarding state on a per-Interest basis.
CCN-DART replaces PITs with 
DARTS, which establish forwarding state for each route traversing the router over which many Interests are multiplexed, rather than establishing state for each different Interest using  routes traversing the router.

A novel approach  was introduced to eliminate forwarding loops even when routing-table loops exist due to inconsistent states in the FIBs. The additional information needed to implement the new forwarding rule at each router is the distance reported by each next hop to each name prefix. 
The results of simulation experiments based on ndnSIM  show that CCN-DART is far more efficient than NDN. It incurs less than an order of magnitude storage overhead at high loads.


 {\fontsize{6}{6}\selectfont

  }

\end{document}